\newtheorem{prop}{Proposition}
\begin{document}
\title{Application of the Waveform Relaxation Technique to the Co-Simulation of Power Converter Controller and Electrical Circuit Models}

\author{
		\IEEEauthorblockN{
		M. Maciejewski\IEEEauthorrefmark{1}\IEEEauthorrefmark{2},
			I. Cortes Garcia\IEEEauthorrefmark{3}, 
			S. Schöps\IEEEauthorrefmark{3}, 
			B. Auchmann\IEEEauthorrefmark{2}\IEEEauthorrefmark{4},
			L. Bortot\IEEEauthorrefmark{2}, 
			M. Prioli\IEEEauthorrefmark{2}, 
			and A.P. Verweij\IEEEauthorrefmark{2}
		}
		\IEEEauthorblockA{\IEEEauthorrefmark{1}Łódź University of Technology, Łódź, Poland}
		\IEEEauthorblockA{\IEEEauthorrefmark{3}Technische Universität Darmstadt, Darmstadt, Germany}
		\IEEEauthorblockA{\IEEEauthorrefmark{4}Paul Scherrer Institut, Villigen, Switzerland}
		\IEEEauthorblockA{\IEEEauthorrefmark{2}CERN, Geneva, Switzerland, E-mail: michal.maciejewski@cern.ch}
	}
\maketitle

\begin{abstract}
In this paper we present the co-simulation of a PID class power converter controller and an electrical circuit by means of the waveform relaxation technique. The simulation of the controller model is characterized by a fixed-time stepping scheme reflecting its digital implementation, whereas a circuit simulation usually employs an adaptive time stepping scheme in order to account for a wide range of time constants within the circuit model. In order to maintain the characteristic of both models as well as to facilitate model replacement, we treat them separately by means of input/output relations and propose an application of a waveform relaxation algorithm. Furthermore, the maximum and minimum number of iterations of the proposed algorithm are mathematically analyzed. The concept of controller/circuit coupling is illustrated by an example of the co-simulation of a PI power converter controller and a model of the main dipole circuit of the Large Hadron Collider.
\end{abstract}

\IEEEpeerreviewmaketitle

\section{Introduction}
Superconducting magnets are used in modern high-energy particle accelerators to steer the trajectory of beams of charged particles. The current in the magnets has to follow the increase of the energy of particles. A dedicated closed feedback loop is implemented for the magnets to generate the desired magnetic field. On top of that, there are quench protection systems responsible for discharging the stored magnetic energy in the circuit in case of emergency. A quench is a local transition from a superconducting to a normal conducting state. After a quench, the magnetic energy is dissipated in a small fraction of the coil and without any counter-measures, such a fault would lead to irreversible damage of the magnet and consequently the circuit. Computer simulations play an important role in understanding these complex phenomena \cite{Bortot_2016ab}. Recently, a waveform relaxation algorithm has been implemented in the STEAM (Simulation of Transient Effects in Accelerator Magnets) project to perform field/circuit coupling for the magneto-thermal analysis of superconducting magnets and circuits \cite{Cortes-Garcia_2017ab}. The goal of the present work on waveform relaxation for controller-circuit coupling is to include controller models in the STEAM framework. This will allow us to study the circuit behavior on a wider range of potential failure scenarios.

In this paper we focus on the nominal operation of a circuit, coupling models of a power converter controller and circuit. We consider a PID type current controller of a power converter and a linear electrical circuit. Typically, the respective models of both, controller and circuit, are developed separately with different software packages. There are several motivations for that. Firstly, both models are solved with different time-stepping, i.e., the controller model is executed with fixed time step, whereas the circuit model employs an adaptive time-stepping algorithm in order to account for the transient behavior. Secondly, the controller is usually designed on the basis of the first-order equivalent electrical circuit models, similarly, a circuit driven by an ideal current source is satisfactory for a wide range of simulation scenarios. Nevertheless, a coupling between both models can provide more information on both controller and circuit behavior, especially for the relevant (nonlinear) failure scenarios, where ultimately partial differential equations are used to mathematically describe the quench.

In \cite{Panda_2015aa} authors employ waveform relaxation method to perform controller testing by means of simulations. This approach is an alternative to hardware in the loop simulations, however does not discuss the convergence properties. We propose the application of waveform relaxation \cite{Lelarasmee_1982ab,White_1985aa,Burrage_1995aa} to the coupled controller/circuit problem and study its features. Instead of sequentially executing controller and circuit simulator according to the sampling frequency, the problem is translated into an iterative process on larger time windows. Such an approach promises benefits in terms communication costs and thus eventually computational time. Ultimately the approach will allow to seamlessly include the action of a controller in a STEAM field-circuit simulation; see above.

The remainder of the paper is organized as follows. Section~II introduces the circuit and controller models. Section~III discusses the proposed waveform relaxation algorithm and studies convergence properties. In Section~IV we present a PI controller design procedure for the first-order model of the Large Hadron Collider (LHC) main dipole circuit. The paper is concluded with an application of the waveform relaxation scheme to the simulation of electrical transients in the superconducting circuits powered by a voltage source controlled by a PI controller. 
 \section{Mathematical Models}
In this section we present a general setting for treating controllers of PID class and linear circuits composed of resistors, inductors, and capacitors as ordinary differential equations (ODEs) and differential-algebraic equations (DAEs) on a time interval $\mathcal{I}=(0,T]$.

\subsection{PID Controller Model}
The governing equation of a continuous PID controller, with $k_{\text{p}}$, $k_{\text{i}}$, and $k_{\text{d}}$ denoting, respectively, proportional, integral, and differential gains, is given in continuous form as
\begin{equation}
U_{\text{con}}(t) = k_{\text{p}} E(t-t_\text{d}) + k_{\text{i}} \!\int_0^t\! E(\tau-t_\text{d}) d\tau + k_{\text{d}} \partial_t E(t-t_\text{d}),
\label{eq:ContinuousPIDController}
\end{equation}
where $t_\text{d}\geq0$ denotes a possible delay due to physical limitations of the controller, and $E(t)=Y_{\text{ref}}(t)-Y_{\text{meas}}(t)$ is the error given as a difference between the reference signal $Y_{\text{ref}}\in C^1(\mathcal{I})$ and its actual waveform $Y_{\text{meas}}\in C^1(\mathcal{I})$. The temporal discretization is given by times $t_j=jh_\text{con}$ with 
$j=1,\ldots,n_\text{con}$. The rectangle method for quadrature and the Implicit-Euler scheme for differentiation yield
\begin{equation}
\begin{aligned}
U_{\text{con}}(t_j) &= k_{\text{p}} E(t_j-t_\text{d}) + k_{\text{i}}h_{\text{con}} \sum_{i=1}^j E(t_i-t_\text{d}) +\\ &+k_{\text{d}} \frac 
{E(t_j-t_\text{d})-E(t_{j-1}-t_\text{d})}{h_{\text{con}}}.
\label{eq:DiscretizedPIDController}
\end{aligned}
\end{equation}
 
For the remainder of the paper let us consider the special case of a PI controller in order to analyze the convergence properties of the considered waveform relaxation method. Disregarding differential gains ($k_d=0$) and differentiating equation \eqref{eq:ContinuousPIDController} yields the following (delay) differential equation
\begin{equation}\label{eq:odecontroller}
\partial_t{U}_{\text{con}}(t) = k_{\text{p}} \partial_t{E}(t-t_\text{d}) + k_{\text{i}} E(t-t_\text{d})
\end{equation}
which is trivial in the sense that the right hand side does not depend on the unknown. It can be rewritten as
\begin{equation}
F_{\text{con}}(\partial_t{x}_{1},\partial_t{x}_2,{x}_2)=0,
\label{eq:ControllerInitialValueProblem}
\end{equation}
with the initial value $x_1(t_0)=x_{1,0}$, the unknown $x_1(t) = 
U_\text{con}(t)$, and the excitation $x_2(t) = Y_{\text{meas}}(t)$.
 
\subsection{Electrical Circuit Model}
Consider an electrical network composed of ${n}_{n}$ nodes with ${\phi}_{j}$ denoting the ${j}$-th nodal voltage with respect to ground, and 
${n}_{b}$ branches with ${I}_{i}$ and ${U}_{i}$ being the current through the ${i}$-th branch and the voltage across the ${i}$-th branch, respectively. 
The circuit topology is described by the 
incidence matrix $A$, with non-zero elements $A_{i,j}$ equal to either 1 or -1 
depending on the assumed direction of branch $i$ with respect to node $j$.

The incidence matrix allows writing the Kirchhoff Current Law as ${A}{I}=0$, where ${I}$ is a vector of branch currents. Furthermore, the vector of nodal voltages ${\phi}$ is related to the vector of branch voltages ${U}$ by $-{A}^{\top}{\phi}=U$.

In case of the LHC main dipole circuit during  nominal operation, i.e., when the power converter is active, only linear capacitors, inductors and resistors are considered, and the incidence matrix A can be decomposed into block form 
\begin{equation}
{A}=[A_{\text{C}}| A_{\text{R}} | A_{\text{L}} | A_{\text{CON}}]
\end{equation}
where $A_\text{R}$
is the resistance incidence matrix, $A_\text{C}$ 
the capacitance incidence matrix, $A_L$
the inductance incidence matrix, and $A_\text{CON}$ 
is the PID controller (voltage source) incidence matrix. Based on those matrices, a so-called conventional formulation of the Modified Nodal Analysis (MNA) 
\cite{Ho_1975aa} is expressed as
\begin{subequations}
\begin{align}
A_\text{C} C A_\text{C}^\top \partial_t \phi + A_\text{L} I_\text{L} + A_\text{R} G A_\text{R}^\top \phi + A_\text{CON} I_\text{con}&=0,\\
L \partial_t I_\text{L} - A_\text{L}^\top \phi &= 0\\
A_\text{CON}^\top\phi &= U_\text{con}
\end{align}
\label{eq:ModifiedNodalAnalysis}
\end{subequations}
with capacitance, inductance, and conductance matrices $C$, $L$, and $G$, respectively. The unknowns of the system are the vector of node potentials $\phi$, and currents through inductors 
$I_\text{L}$. 
In order to obtain a unique solution, 
an additional equation for the PI controller's unknown current in terms of the applied voltage drop is added and a grounding node is selected.

In analogy to \eqref{eq:ControllerInitialValueProblem}, the system \eqref{eq:ModifiedNodalAnalysis} is reformulated into an abstract differential-algebraic initial-value problem as
\begin{equation}
F_{\text{cir}}(\partial_t x_2, x_2, x_1)=0
\label{eq:CircuitInitialValueProblem}
\end{equation}
with initial value $x_2(t_0)=x_{2,0}$ and unknowns 
$x_2:=(\phi,I_{\text{L}},I_\text{con})^{\top}$. The external input function $x_1(t)$ is given in the studied case by the output voltage of the 
controller $U_\text{con}(t)$ which is obtained from $Y_{\text{meas}}(t)=I_\text{con}(t)$; see \eqref{eq:ControllerInitialValueProblem}.

Let us discuss the simple case of an inductance connected in series to a resistance and a controller (see Figure \ref{fig:simplecir}); the incidence matrices read 
$$
A_\text{C} \begin{bmatrix}0\\0\end{bmatrix},\;
A_\text{L} = \begin{bmatrix}0\\1\end{bmatrix},\; 
A_\text{R}=\begin{bmatrix}1\\-1\end{bmatrix},\;
A_\mathrm{CON}=\begin{bmatrix}-1\\0\end{bmatrix},
$$ 
with the elements $L=L_\mathrm{eq}$ and $G=R_\mathrm{eq}^{-1}$. The MNA yields four DAEs in the (redundant) variables $\phi_1$, $\phi_2$, $I_\text{con}$ and $I_\text{L}$. It is mathematically equivalently expressed by the ODE
\begin{equation}
	\label{eq:CircuitODE}
	\partial_t I_\text{con} = - \frac{R_\mathrm{eq}}{L_\mathrm{eq}}I_\text{con} - \frac{1}{L_\mathrm{eq}}U_\mathrm{con}.
\end{equation}
Recovering the underlying ODE is straightforward for small systems and if there are no LI cutsets or CV loops present in the circuit \cite{Estevez-Schwarz_2001aa}.

\begin{figure}
	\centering
	\includegraphics[width=0.3\textwidth]{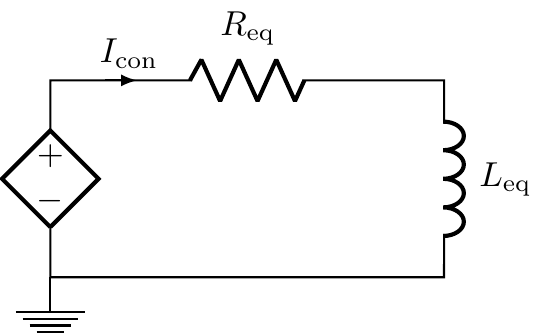}
	\caption{First order model of the LHC main dipole circuit.}
	\label{fig:simplecir}
\end{figure}
 
\section{Analysis of the Waveform Relaxation Scheme}
In practice, the circuit model \eqref{eq:CircuitInitialValueProblem} is usually numerically solved with an adaptive time stepping algorithm (e.g. trapezoidal rule or BDF schemes, \cite{Gunther_2000aa}) whereas the controller model employs fixed time steps corresponding to its sampling frequency. 
Coupling of both models via waveform relaxation allows to maintain this setup, by exchanging waveforms at communication time points $T_j$. For that purpose, the overall simulation time $\mathcal{I}$ is divided into $N$ time windows $\mathcal{I}_j = (T_j, T_{j+1}]$ with $j=0,...,N-1$. For each time window the 
two models are 
solved separately and their solutions (waveforms) are exchanged. The exchange of the waveforms can be organized such that (i) both models are provided with results from 
the previous iteration, (ii) one model is provided with a solution from the previous iteration and the other one with a solution calculated in the 
current iteration. The first method, called Jacobi scheme, is profitable for a parallel execution of models, whereas the second one, the Gauss-Seidel scheme, 
brings the benefit of faster convergence but it implies sequential execution of models \cite{Burrage_1995aa}. The termination criterion for the 
waveforms' exchange, i.e., for the 
convergence of 
the iterations in $\mathcal{I}_j$, is determined by applying an appropriate norm to measure the difference of two subsequent waveforms obtained with 
one or both of the models.

We employ the Gauss-Seidel scheme for the waveform relaxation method. At each time window, controller and circuit models are solved sequentially as depicted in Fig.~\ref{fig:WaveformRelaxationTechnique}. In order to incorporate the behavior of the delay $t_\mathrm{d}$ a zero order hold (ZOH) system is present in the main control loop \cite{Bartoszewski_2000aa}.

\begin{figure}[H]
\centering
\includegraphics[width=0.45\textwidth]{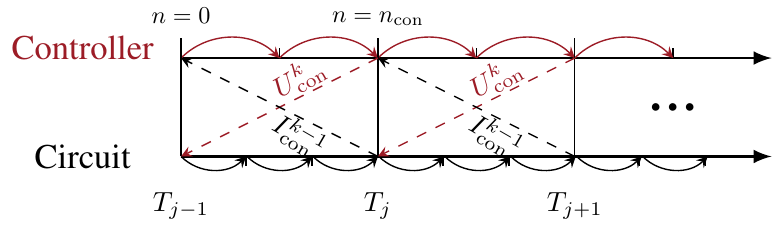}
\caption{\label{fig:WaveformRelaxationTechnique} Gauss-Seidel scheme for the information exchange between circuit (black) and controller (claret) models by means of the waveform relaxation technique. The controller is solved first, from the current calculated in the previous iteration. Then the obtained voltage is and input to the circuit model.}
\end{figure}

In the case of the considered coupled problem, with $j$ and $k$ denoting the time window and convergence iteration index, respectively, 
the waveform  relaxation algorithm takes the following steps:

\begin{enumerate}
\item[0)] Set $k=0$ and extrapolate $x_1^{(0)}(t)=x_{1,j,0}$ and $x_2^{(0)}(t)=x_{2,j,0}$ for $t\in\mathcal{I}_j$.

\item[1)] Solve (\ref{eq:ControllerInitialValueProblem}) with initial value $x_{1,j,0}$ and input $x_2^{(k)}(t)$ using fixed time steps to obtain 
$x_1^{(k+1)}(t)$ for $t\in\mathcal{I}_j$.

\item[2)] Solve (\ref{eq:CircuitInitialValueProblem}) with initial value $x_{2,j,0}$ and input $x_1^{(k+1)}(t)$ using an adaptive time stepping 
scheme 
to obtain $x_2^{(k+1)}(t)$ for $t\in\mathcal{I}_j$.

\item[3)] If not converged, then set $k=k+1$ and go to Step 1). Otherwise set $j=j+1$ and 
$$x_{1,j,0}=x_{1}^{(k)}(T_{j}) \text{  and  } x_{2,j,0}=x_{2}^{(k)}(T_{j}).$$ 
If $j<N$ proceed Step 0) to start the next time window.

\end{enumerate}
In the important case of ODEs, i.e., if there are no algebraic constraints 
\begin{align*}
	\mathrm{det}\frac{\partial F(\dot{y},y,u)}{\partial \dot{y}}\neq0,
\end{align*}
with $u(t)$, $y(t)$ being piecewise continuous functions and $F$ being globally Lipschitz continuous with respect to $y$ for all $u$, the Gauss-Seidel waveform relaxation algorithm 
converges uniformly on bounded time intervals (see \cite{Lelarasmee_1982ab,Burrage_1995aa}). This implies that the controller-circuit waveform relaxation scheme converges provided the circuit is given as the ODE \eqref{eq:CircuitODE}. This analysis can be extended to DAEs, e.g. \cite{Jackiewicz_1996aa,Bartel_2014ab} and even to delay differential equations \cite{Bartoszewski_2000aa}. However, the classical reasoning disregards errors due to time stepping since this error can be made arbitrarily small by reducing the step size $h$ in the respective models. This is not possible in the case of an external controller, where the time step is not given by numerical concerns but dictated by the actual hardware. The following analysis discusses this time-discrete setting. 

\begin{prop}
	Given the proposed waveform relaxation algorithm coupling equations \eqref{eq:odecontroller} and \eqref{eq:CircuitODE} with controller time step 
	$h_\mathrm{con} = t_\mathrm{d}$ and disregarding the error of the adaptive time stepping scheme used to solve \eqref{eq:CircuitODE}, then
	\begin{equation}
	I_{\text{con}}^{(k+1)}(t)=I_{\text{con}}^{(k)}(t), \; t\in\mathcal{I}_j
	\end{equation}
	for all $k\geq n_\mathrm{con}$.
\end{prop}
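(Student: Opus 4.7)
The plan is to trace how information propagates through one Gauss--Seidel sweep, one sample at a time. The key structural fact is that the choice $h_{\text{con}} = t_{\mathrm{d}}$ makes \eqref{eq:DiscretizedPIDController} a \emph{strictly} causal discrete map: after the delay shift, $U_{\text{con}}^{(k+1)}(t_j)$ depends on the previous iterate only through the earlier samples $I_{\text{con}}^{(k)}(t_0),\ldots,I_{\text{con}}^{(k)}(t_{j-1})$, not through $I_{\text{con}}^{(k)}(t_j)$. The ZOH in front of the circuit then feeds \eqref{eq:CircuitODE} the constant voltage $U_{\text{con}}^{(k+1)}(t_j)$ on each subinterval $(t_{j-1},t_j]$, and since we disregard the adaptive-stepping error, the resulting $I_{\text{con}}^{(k+1)}$ on that subinterval is determined exactly by this constant voltage together with $I_{\text{con}}^{(k+1)}(t_{j-1})$.

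With this dependency graph in hand, I would prove by induction on $j\in\{0,1,\ldots,n_{\text{con}}\}$ the statement that the sample $I_{\text{con}}^{(k)}(t_j)$ is independent of the iteration index for all $k\geq j$. The base case $j=0$ is immediate, because every iteration within the window is launched from the same initial value $I_{\text{con}}^{(k)}(t_0)=I_{\text{con},j,0}$. For the induction step, assuming the samples $I_{\text{con}}^{(k)}(t_i)$ are frozen for every $i\leq j-1$ and $k\geq i$, strict causality shows that $U_{\text{con}}^{(k+1)}(t_j)$ is frozen as soon as $k\geq j-1$; feeding this constant voltage together with the already-frozen $I_{\text{con}}^{(k+1)}(t_{j-1})$ into the circuit ODE over $(t_{j-1},t_j]$ freezes $I_{\text{con}}^{(k+1)}(t_j)$ for $k\geq j$.

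For $k\geq n_{\text{con}}$ all samples in the window are therefore stable, and exactly the same constant-voltage ODE argument upgrades sample-wise equality to equality of the continuous waveforms: on each $(t_{j-1},t_j]$ the iterates $k$ and $k+1$ solve the same ODE from the same starting value with the same ZOH input, hence coincide pointwise. This gives $I_{\text{con}}^{(k+1)}(t)=I_{\text{con}}^{(k)}(t)$ on $\mathcal{I}_j$ as claimed. I expect no analytic difficulty; the only care required is in the index bookkeeping, namely keeping the Gauss--Seidel convention (the circuit is driven by the freshly computed $U_{\text{con}}^{(k+1)}$) aligned with the off-by-one shift induced by $t_{\mathrm{d}}=h_{\text{con}}$. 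Once the strict causality is stated explicitly, the rest of the argument is essentially combinatorial.
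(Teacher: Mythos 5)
Your proposal is correct and follows essentially the same route as the paper: the paper writes the exact circuit solution via the variation-of-constants formula, discretizes the controller with implicit Euler, observes that the resulting recursion for $x_{1,j,n}^{(k+1)}$ involves the previous iterate only through samples at $t_{j,n-1}$ and earlier, and concludes ``via induction'' on the sample index --- exactly the strict-causality/nilpotency structure you identify. Your write-up actually spells out the induction (base case at $t_{j,0}$, freezing of samples for $k\geq j$, and the upgrade from sample-wise to waveform equality) that the paper's proof leaves implicit.
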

\begin{proof}
	Being $U_\mathrm{con}(t) = x_1(t)$ and $I_\mathrm{con}(t)=x_2(t)$ at $t\in\mathcal{I}_j$
	, a fine enough discretization of \eqref{eq:CircuitODE} 
	is assumed such  that the time 
	stepping error can be disregarded which yields 
	\begin{align}\label{eq:x2exact}
		x_2^{(k+1)}(t) =& e^{-\frac{R_\mathrm{eq}}{L_\mathrm{eq}}(t-t_{j,0})}\left(x_2(t_{j,0}) + 
		K_jx_1^{(k+1)}(t)\right),
	\end{align} 
	with operator $K_ju(t) =\int_{t_{j,0}}^te^{\frac{R_\mathrm{eq}}{L_\mathrm{eq}}(\tau-t_{j,0})}u(\tau)d\tau$.
	For equation \eqref{eq:odecontroller} an implicit Euler scheme is applied at time steps 
	\mbox{$T_j=t_{j,0}<...<t_{j,n_\mathrm{con}}=T_{j+1}$}. This, combined with equation \eqref{eq:x2exact} leads to
	\begin{align*}
		x_{1,j,n}^{(k+1)} =& x_{1,j,n-1}^{(k+1)} + h_\mathrm{con}k_p\partial_t{Y_\mathrm{ref}(t_{j,n-1})} \\
		&-\frac{h_\mathrm{con}k_\mathrm{p}}{L_\mathrm{eq}}x_{1,j,n-1}^{(k)}+ h_\mathrm{con}k_\mathrm{i}Y_\mathrm{ref}(t_{j,n-1}) \\
		&+\left(\frac{h_\mathrm{con}k_\mathrm{p}R_\mathrm{eq}}{L_\mathrm{eq}}-k_\mathrm{i}h_\mathrm{con}\right)
		e^{-\frac{R_\mathrm{eq}}{L_\mathrm{eq}}(t_{j,n-1}-t_{j,0})}\\
		&\left(x_2(t_{j,0})+ Kx_1^{(k)}(t)|_{t=t_{j,n-1}} \right),
	\end{align*}
	with $x_1^{(k+1)}(t_{j,n}) = x_{1,j,n}^{(k+1)}$.
	 It is sufficient 
	to show that for $k\geq n$,
	$x_{1,j,n}^{(k+1)}= x_{1,j,n}^{(k)}$, as then $x_2(t_{j,n})^{(k+1)} = x_2(t_{j,n})^{(k)}$, which implies the proposition. This follows via induction.
\end{proof}
An additional termination criterion can be introduced. The iteration stops when either (i) a difference between two 
subsequent waveforms is below a certain tolerance, or (ii) the convergence iteration index $k$ reached $n_\mathrm{con}$. Unless the current converged in 
less than 
$n_\mathrm{con}$ 
iterations, we can observe that the iteration-based stopping criterion requires one convergence iteration less with respect to the current-based 
convergence criterion. Additionally, if the time window size $H_j =T_{j+1}-T_j$ is equal to the controller sampling period, 
$h_{\text{con}}$, 
according to  Proposition 1, the algorithm reduces to a weak coupling scheme, i.e. both models are executed only once ($n_\mathrm{con}=1$) and the  
convergence check is not carried out.

For a certain class of open loop systems and reference profiles, the number of convergence iterations in the steady state is less than 
$n_\mathrm{con}$ as 
shown in Proposition 2.

\begin{prop} For systems of type \textit{p}, i.e. systems with \textit{p}-fold pole equal to zero in the continuous open loop 
system transfer function, with the reference signal being a polynomial function of time of (\textit{p}-1)-\textit{th} order, and for a given time 
window in the steady state, the proposed waveform relaxation algorithm converges at $k=1$.
\end{prop}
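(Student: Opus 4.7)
The plan is to combine the classical control-theoretic fact that a closed-loop system of type $p$ produces zero steady-state tracking error for a polynomial reference of degree $p-1$ with a direct substitution into the one-sweep recursion already set up in the proof of Proposition~1. In steady state on the window $\mathcal{I}_j$ the true trajectory satisfies $E(t)\equiv 0$, so $x_2^{\mathrm{ss}}(t)=Y_{\mathrm{ref}}(t)$, and the corresponding exact voltage $U_\mathrm{con}^{\mathrm{ss}}(t)$ is the unique waveform that, when fed to the plant with the matching initial current, reproduces $Y_{\mathrm{ref}}$; for the model in \eqref{eq:CircuitODE} this reads $U_\mathrm{con}^{\mathrm{ss}}(t)=-L_\mathrm{eq}\partial_t Y_{\mathrm{ref}}(t)-R_\mathrm{eq}Y_{\mathrm{ref}}(t)$.

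Using the steady-state hypothesis I would set $x_{1,j,0}=U_\mathrm{con}^{\mathrm{ss}}(T_j)$ and $x_{2,j,0}=Y_{\mathrm{ref}}(T_j)$ and walk through one Gauss--Seidel sweep starting from the constant extrapolations $x_1^{(0)}\equiv x_{1,j,0}$ and $x_2^{(0)}\equiv x_{2,j,0}$. In Step~1 the discrete controller \eqref{eq:DiscretizedPIDController} is driven by $x_2^{(0)}$, producing an error $E^{(0)}(t)=Y_{\mathrm{ref}}(t)-x_{2,j,0}$. The central claim to verify is that the resulting $x_1^{(1)}$ coincides with $U_\mathrm{con}^{\mathrm{ss}}$ at every controller grid point; this is the place where the type-$p$/degree-$(p-1)$ hypothesis is actually used. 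Once this is established, Step~2 integrates the circuit with the exact $U_\mathrm{con}^{\mathrm{ss}}$ and matching initial current, so its solution is literally $Y_{\mathrm{ref}}$ and $x_2^{(1)}(t)=Y_{\mathrm{ref}}(t)$. In the next sweep ($k=1$) the controller now sees the exact measurement, the error collapses to zero, hence $x_1^{(2)}=x_1^{(1)}$ and $x_2^{(2)}=x_2^{(1)}$, which is precisely the convergence criterion.

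The main obstacle is the central claim in Step~1. For $p=1$ it is immediate: the reference is constant, both $x_2^{(0)}$ and $Y_{\mathrm{ref}}$ are the same constant, so $E^{(0)}\equiv 0$, the PI integrator stays put, and $x_1^{(1)}\equiv x_{1,j,0}=U_\mathrm{con}^{\mathrm{ss}}$. For general $p$ the extrapolation is not exact and $E^{(0)}$ is a polynomial of degree $p-1$, so one must argue that the PI (more generally PID) update, initialized with the steady-state integrator state, nonetheless reconstructs $U_\mathrm{con}^{\mathrm{ss}}$ over the whole window. I would attack this by rewriting the implicit-Euler update as in the proof of Proposition~1, substituting the plant-inverse expression for $U_\mathrm{con}^{\mathrm{ss}}$, and showing that the discrete increments $x_{1,j,n}^{(1)}-x_{1,j,n-1}^{(1)}$ telescope to $U_\mathrm{con}^{\mathrm{ss}}(t_{j,n})-U_\mathrm{con}^{\mathrm{ss}}(t_{j,n-1})$; this telescoping is the algebraic shadow of the internal-model principle which guarantees, for exactly the same type/reference pairing, vanishing steady-state error in the continuous loop. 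With that identity in hand, the remaining steps are bookkeeping and the proposition follows.
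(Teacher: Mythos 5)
There is a genuine gap, and it sits exactly where you flag it: the ``central claim'' that the first controller sweep reconstructs $U_\mathrm{con}^{\mathrm{ss}}$ is left unproven, and the telescoping you propose does not actually occur for $p\ge 2$. In sweep $k=0$ the controller is driven by the constant extrapolation $x_2^{(0)}(t)\equiv Y_\mathrm{ref}(T_j)$, so it sees the error $E^{(0)}(t)=Y_\mathrm{ref}(t)-Y_\mathrm{ref}(T_j)$, a nonzero polynomial of degree $p-1$ whenever $p\ge2$. The implicit-Euler PI increment is $k_\mathrm{p}\bigl(E^{(0)}(t_{j,n})-E^{(0)}(t_{j,n-1})\bigr)+k_\mathrm{i}h_\mathrm{con}E^{(0)}(t_{j,n})$, whose integral contribution grows with $n$, while the increments of the plant-inverse voltage $U_\mathrm{con}^{\mathrm{ss}}$ are fixed by $Y_\mathrm{ref}$ alone; already for a ramp reference the two disagree, so no internal-model argument will make them telescope. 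Your plan closes only when $E^{(0)}\equiv0$, i.e.\ for $p=1$ with a constant reference --- which is precisely the situation the paper's much shorter argument exploits: it invokes the vanishing steady-state error to conclude from \eqref{eq:DiscretizedPIDController} that the first computed controller output already equals the extrapolated one, $U_{\text{con}}^{(1)}(t_{j,n})=U_{\text{con}}^{(0)}(t_{j,n})$, hence $I_{\text{con}}^{(1)}=I_{\text{con}}^{(0)}$, with no plant inversion and no second sweep.

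Independently of that, there is an off-by-one problem. Even if your central claim held, what you establish is $x_1^{(2)}=x_1^{(1)}$ and $x_2^{(2)}=x_2^{(1)}$, i.e.\ agreement between the second and third sweeps; under the paper's termination test, which compares $I_{\text{con}}^{(k)}$ with $I_{\text{con}}^{(k-1)}$, that is convergence detected at $k=2$, not $k=1$. To obtain the stated result you must show that the \emph{first} computed iterate already coincides with the extrapolation, $I_{\text{con}}^{(1)}=I_{\text{con}}^{(0)}$, which is the step the paper carries out directly via the zero steady-state error rather than via two full Gauss--Seidel sweeps.
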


\begin{proof} The steady state error for systems of type \textit{p} with reference signals that are polynomials of time of order \textit{p}-1, is zero
\begin{equation}
e_{\text{ss}}(t)=s^{p-1}G_{\text{o}}(s)=s^{p-1}G_{\text{CON}}(s)G_{\text{CIR}}(s)=0,
\end{equation}
where $G_{\text{CON}}(s)$ and $G_{\text{CIR}}(s)$ are continuous transfer functions of explicit, input/output forms of 
(\ref{eq:ControllerInitialValueProblem}) and (\ref{eq:CircuitInitialValueProblem}), respectively. Since the steady state error is equal to zero, 
according to (\ref{eq:DiscretizedPIDController}), the controller output does not change 
$U_{\text{con}}^{(1)}(t_{j,n})=U_{\text{con}}^{(0)}(t_{j,n})$. 
The extrapolated controller output results in identical subsequent current profiles $I_{\text{con}}^{(1)}(t_{j,n})=I_{\text{con}}^{(0)}(t_{j,n})$, 
which concludes 
the proof.
\end{proof}

Proposition 2 indicates the minimum number of convergence iterations, and together with Proposition 1 provides upper and lower bounds for the number 
of convergence iterations during a transient. We obtain $2 \leq k \leq n_\mathrm{con}$.

The proposed waveform relaxation algorithm is now applied to a practical example of the co-simulation of LHC main dipole circuit and its power-converter controller. The next section deals with a selection of a PI controller gains based on an equivalent first-order approximation of the RB circuit. \section{PI Controller Design for a First Order Model of the LHC Main Dipole Circuit}
The goal of the current controller is to follow a predefined current profile $I_{\text{ref}}(t)$ corresponding to the energy increase of the 
particles in the accelerator. In other words, the increase of the magnetic field in the main dipole chain is synchronized with the increase of the 
energy of particles provided by the accelerating cavities. To ensure continuity of the first order derivative, i.e., 
$I_{\text{ref}} \in C^1(\mathcal{I})$, parabolic joints are introduced between linear sections whenever the current derivative is changing its value. 

For the controller design, we consider the LHC main dipole circuit, composed of 154 dipole magnets connected in series along with protection and 
powering devices. The main dipole circuit is represented as a first order system with equivalent inductance $L_{\text{eq}} = 154~L_{\text{dipole}} 
= 15.4 ~\text{H}$ and equivalent resistance $R_{\text{eq}} = 1~ \text{m}\Omega$. The design will be carried out in frequency domain by means of the 
Laplace transform for the closed loop system shown in Fig. \ref{fig:ClosedLoopSystem}. 

The transfer function of the first-order circuit model is given as
\begin{equation}
G_{\text{CIR}}(s) = \frac{I_{\text{con}}(s)}{U_{\text{con}}(s)} = \frac{1}{sL_{\text{eq}}+R_{\text{eq}}}
\label{eq:TransferFunctionRB}
\end{equation}
and the PI current controller transfer function reads
\begin{equation}
G_{\text{CON}}(s) = \frac{U_{\text{con}}(s)}{E(s)} = k_{\text{p}} + k_{\text{i}} \frac{1}{s}.
\label{eq:TransferFunctionPI}
\end{equation}
The open loop transfer function is
\begin{equation}
G_{\text{o}}(s) = G_{\text{CON}}(s)G_{\text{CIR}}(s) = 
\frac{k_{\text{p}} s + k_{\text{i}}}
{L_{\text{eq}}s^2+R_{\text{eq}}s}.
\label{eq:ContinuousTransferFunctionOpenLoop}
\end{equation}

The characteristic polynomial of the closed loop system has the following form
\begin{equation}
 s^2 + \frac{(R_{\text{eq}} +k_{\text{p}})}{L_{\text{eq}}}s + \frac{k_{\text{i}}}{L_{\text{eq}}} = 0.
\label{eq:CharacteristicPolynomialOpenLoop}
\end{equation}
The selection of the controller influences the roots of the polynomial and in consequence the dynamic response of the system. For that purpose we 
compare (\ref{eq:CharacteristicPolynomialOpenLoop}) with the normalized second order polynomial equation
\begin{equation}
 s^2 + 2\zeta\omega_0s + \omega_0^2 = 0,
\label{eq:NormalizedPolynomial2ndOrder}
\end{equation}
with $\zeta$ representing the relative damping, and where $\omega_0$ is the undamped natural frequency of the system. By comparing the coefficients of the respective polynomials (\ref{eq:CharacteristicPolynomialOpenLoop}) and (\ref{eq:NormalizedPolynomial2ndOrder}) we derive the relations for the controller gains
\begin{subequations}
\label{eq:controller_gains}
		\begin{align}
			 \omega_0^2-\frac{k_{\text{i}}}{L_{\text{eq}}}&=0,\label{eq:controller_gains_k_i}\\
			 \frac{(R_{\text{eq}} +k_{\text{p}})}{L_{\text{eq}}} - 2\zeta\omega_0 &= 0.\label{eq:controller_gains_k_p}
		\end{align}
\end{subequations}
For a practical application, the circuit should smoothly respond to a change in the reference current and oscillations are not desired. The typical 
value of relative damping is $\zeta=\frac{1}{\sqrt{2}} \approx 0.7071$. The value of natural frequency is determined by the bandwidth of the 
reference current profile and is equal to $f_{\text{bw}} = 1~ \text{Hz}$ leading to $\omega_0 = 2\pi f_{\text{bw}} = 6.28 ~\text{rad/s}$. 
\begin{figure}
\centering
\includegraphics[width=0.45\textwidth]{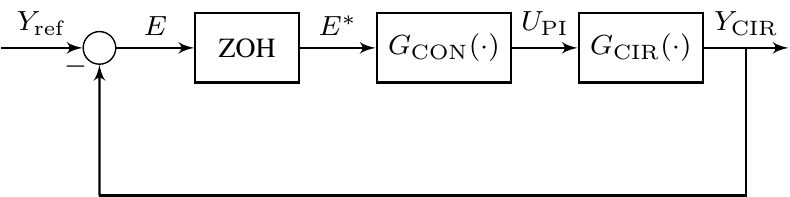}
\caption{\label{fig:ClosedLoopSystem}Closed loop feedback system with a discrete zero order hold (ZOH) element at the controller input}
\end{figure} 
\section{Numerical Examples}
The performance of the proposed waveform relaxation scheme is verified by means of two simulation scenarios compared to a monolithic 
reference simulation. For circuit and controller simulation we employ ORCAD Cadence PSpice and Powersim PSIM, respectively. The monolithic simulations are 
carried out with Powersim PSIM. According to the procedure described in Section IV (\ref{eq:controller_gains}), the controller gains are equal to 
$k_\text{p}=136.84$ and $k_\text{i}=607.97$. The controller sampling period is $t_\text{d}=0.04~\mathrm{s}$ is chosen as controller step size $h_{\text{con}}=t_\text{d}$ and the maximum time step size 
for the adaptive time-stepping algorithm of the circuit simulation is set to $h_{\text{cir}}\leq0.04~\mathrm{s}$ with absolute error tolerance equal to 
$10^{-10}$. 

\subsection{Step Response of a First-Order Model}
In the first test we verify the algorithm operation and validate Proposition 1 and 2. We consider a step response of the first-order model of the RB circuit co-simulated for $2.4$~s. The time interval $\mathcal{I}$ is divided into windows of fixed length $H=0.16~\mathrm{s}$ ($n_\text{con}=4$). 
Communication between the solvers occurs at discrete time instants $T_j=jH$. In order to demonstrate Proposition 1 and 2 we consider only the 
current-based 
termination 
criterion given as
\begin{equation*}
\frac{\int_{t_j}^{t_{j+1}}|I_{\text{cir}}^{(k)}(\tau)-I_{\text{cir}}^{(k-1)}(\tau)|\mathrm{d}\tau}
{\int_{t_j}^{t_{j+1}}|I_{\text{cir}}^{(k)}(\tau)|\mathrm{d}\tau}\leq10^{-6}, \text{for} \quad k\geq 1. 
\end{equation*}

Figure~\ref{fig:StepResponseCurrent} shows the comparison between the reference step current profile and the current responses of the co-simulated 
and monolithic simulations. The test shows a good agreement between results obtained in both cases.

Controller output voltages for the first time window are given in Table~1, corresponding to iteration steps, as well as to the monolithic simulation. As one can notice, at each consecutive iteration, the controller output $U_{\text{con,co-sim}}^k(t)$ approaches the monolithic solution 
$U_{\text{con,mono}}(t)$. In other words, each iteration $k$ extends the interval of time for which the controller output is equal to the monolithic reference 
solution, which is the typical behavior of the waveform relaxation scheme \cite{White_1985aa}. It is also worth noticing that the controller output for the last two 
iterations is identical, as the current-based convergence criterion requires two subsequent iterations with the same voltage inputs.

The number of iterations for each time window are reported in Fig.~\ref{fig:StepResponseControllerOutput}. During the initial transient, 
the number of convergence iterations is equal to $n_\text{con}+1$, whereupon it gradually decreases to two iterations, as the integral part of the PI controller calculates appropriate output for the studied circuit topology and excitation function. (Note that $k=0$ is the first iterate.) In this case, the open loop system is of type 1 as one can notice from (\ref{eq:ContinuousTransferFunctionOpenLoop}). To conclude, the test underlines the correctness of the number of iterations predicted by Propositions 1 and 2.

\begin{figure}
	\centering
	\includegraphics[width=0.5\textwidth]{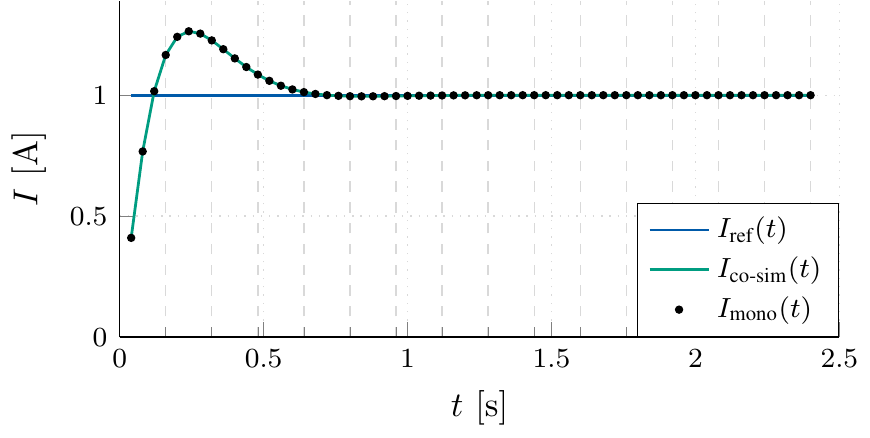}
	\vspace{-2em}
	\caption{Current evolution as a result of the step response of the PI controller and the first-order RB circuit model }
	\label{fig:StepResponseCurrent}
\end{figure}

\begin{table}
\caption{Controller output voltage for convergence iterations of the first time window of the co-simulation}
\vspace{-1em}
\fontsize{10}{10}\selectfont
\renewcommand{\arraystretch}{1.3}
\begin{center}
\begin{tabular}{ ccccc } 
 \hline 
 $t$ [s] & 0.04 & 0.08 & 0.12 & 0.16 \\
 \hline
 $U_\text{con, mono}(t)$ [V] & 161.16 & 119.34 & 76.12 & 41.67 \\ [0.5ex]
  \hline
 $U_\text{con, co-sim}^0(t)$ [V]  & 161.16 & 185.48 & 209.80 & 234.11 \\ [0.5ex]
 $U_\text{con, co-sim}^1(t)$ [V]  & 161.16 & 119.34 & 62.55 & -14.95 \\ [0.5ex]
 $U_\text{con, co-sim}^2(t)$ [V]  & 161.16 & 119.34 & 76.12 & 44.45\\ [0.5ex]
 $U_\text{con, co-sim}^3(t)$ [V]  & 161.16 & 119.34 & 76.12 & 41.67 \\ [0.5ex]
 $U_\text{con, co-sim}^4(t)$ [V]  & 161.16 & 119.34 & 76.12 & 41.67 \\ [0.5ex]
\hline
\end{tabular}
\end{center}
\label{table:proposition1}
\end{table}

\begin{figure}
	\centering
	\includegraphics[width=0.5\textwidth]{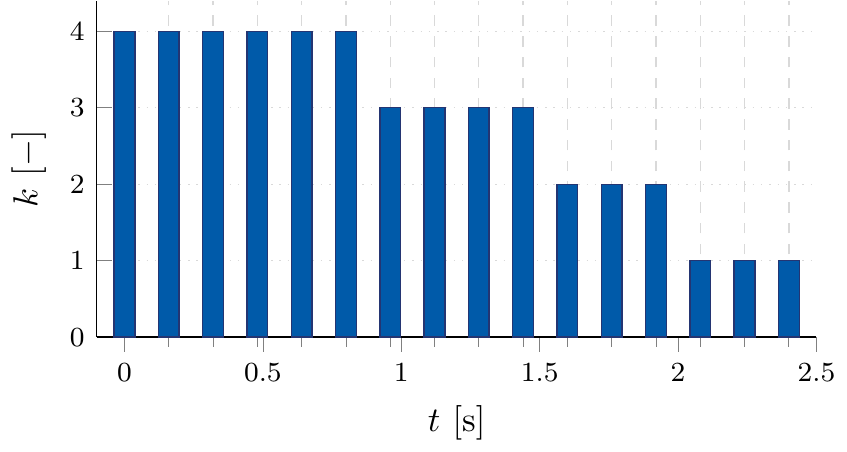}
	\vspace*{-2em}
	\caption{Number of convergence iterations per time window}
	\label{fig:StepResponseControllerOutput}
\end{figure}
 
\subsection{Parabolic-Linear Response of a Realistic Model}
After verifying the operation and convergence properties of the developed waveform relaxation coupling scheme, in the second test we perform a co-simulation of the considerably more complex LHC main dipole circuit composed of several thousand lumped components \cite{Ravaioli_2012aa}. The underlying ODE of this realistic model cannot be easily extracted and therefore this setup may not be covered by classical arguments of waveform relaxation analysis. 

The circuit is composed of 154 equivalent RLC models of a dipole magnet along 
with protection devices, a power converter and a filter. Parameters of the equivalent models were identified to match the frequency behavior of each 
of the magnets \cite{Ravaioli_2012ab}. We consider nominal operation of the circuit, i.e., quench protection systems are deactivated.
In this test, we verify the algorithm's operation as a weak coupling scheme. The co-simulation is performed for $120~$s and covers the initial parabolic part of the current profile, concatenated with the linear increase. The time interval $\mathcal{I}$ is divided into windows of fixed length $H=0.04~\mathrm{s}$ ($n_\text{con}=1$). 
The calculated current response accurately reproduces the reference current profile (see Fig.~\ref{fig:ParabolicLinearCurrentResponse}). The power converter controller output voltage is shown in Fig.~\ref{fig:ParabolicLinearVoltageResponse}.

\begin{figure}
	\centering
	\includegraphics[width=0.45\textwidth]{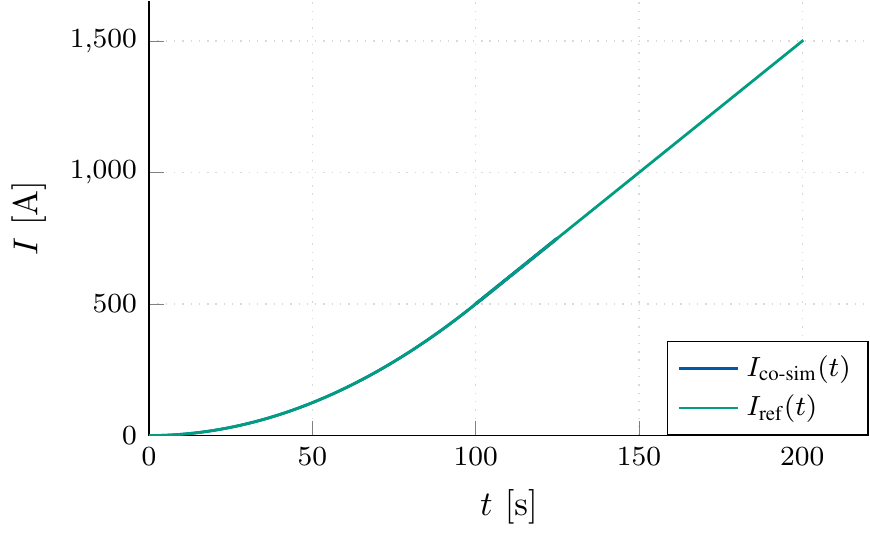}
	\vspace*{-1em}
	\caption{Initial part of the reference current profile composed of a parabolic increase ($0<t<100$) followed by  a linear ramp ($t>100$). Comparison of a parabolic-linear reference current and current solution for the controller/circuit coupling }
	\label{fig:ParabolicLinearCurrentResponse}
\end{figure}

\begin{figure}
	\centering
	\includegraphics[width=0.45\textwidth]{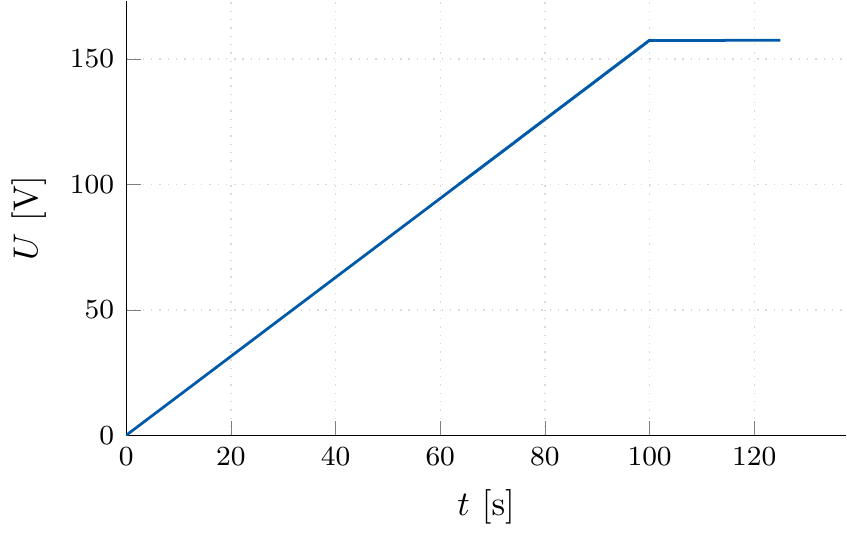}
	\vspace{-1em}
	\caption{PI controller output for the parabolic-linear reference current profile}
	\label{fig:ParabolicLinearVoltageResponse}
\end{figure}

\section{Conclusion}
In this paper, a waveform relaxation algorithm for the co-simulation of power-converter controller and electrical-circuit models has been proposed. 
The algorithm divides the total time interval into windows, and for each window both models are solved separately with an appropriate time-stepping 
algorithm. Properties of the waveform-relaxation algorithm's convergence have been studied and stated in two propositions. The first proposition defines the maximum number 
of convergence iterations leading to an additional termination criterion and potential reduction of the number of iterations with respect to the 
current-based stopping condition. The second proposition analyses the conditions for the minimum number of iterations. Boundedness of the proposed 
algorithm has been proven and demonstrated by means of numerical experiments. The algorithm has been also applied to a co-simulation of the 
power converter and the LHC main dipole circuit.

The proposed waveform relaxation algorithm will become an integral part of the STEAM framework and will allow for a more detailed analysis of  
protection circuits of superconducting magnets \cite{Bortot_2016ab}. One application consists of co-simulation of a power converter controller, a circuit, and the a quenching magnet. 
Such a model would allow to study the influence of the quench initiation on the power converter's response, and ensuing transients occurring in the circuit. A next 
step includes the analysis of the convergence of the electrical circuit with nonlinear elements and possibly partial differential equations.

\section*{Acknowledgment}

The authors would like to thank Samer Yammine from CERN for a discussion on the LHC main dipole power converter structure and characteristics.
This work has been partially supported by the Excellence Initiative of the German Federal and State Governments and the Graduate School of CE at TU Darmstadt.



\begin{thebibliography}{10}
\providecommand{\url}[1]{#1}
\csname url@samestyle\endcsname
\providecommand{\newblock}{\relax}
\providecommand{\bibinfo}[2]{#2}
\providecommand{\BIBentrySTDinterwordspacing}{\spaceskip=0pt\relax}
\providecommand{\BIBentryALTinterwordstretchfactor}{4}
\providecommand{\BIBentryALTinterwordspacing}{\spaceskip=\fontdimen2\font plus
\BIBentryALTinterwordstretchfactor\fontdimen3\font minus
  \fontdimen4\font\relax}
\providecommand{\BIBforeignlanguage}[2]{{%
\expandafter\ifx\csname l@#1\endcsname\relax
\typeout{** WARNING: IEEEtran.bst: No hyphenation pattern has been}%
\typeout{** loaded for the language `#1'. Using the pattern for}%
\typeout{** the default language instead.}%
\else
\language=\csname l@#1\endcsname
\fi
#2}}
\providecommand{\BIBdecl}{\relax}
\BIBdecl

\bibitem{Bortot_2016ab}
L.~Bortot, M.~Maciejewski, A.~M. Prioli, Marco Fernandez~Navarro, J.~B. Ghini,
  B.~Auchmann, and A.~P. Verweij, ``{A consistent
  simulation of electro-thermal transients in accelerator circuits},''
  \emph{{IEEE Trans. Appl. Super.}}, vol.~27,
  no.~4, Jun. 2016.

\bibitem{Cortes-Garcia_2017ab}
I.~Cortes~Garcia, S.~Schöps, L.~Bortot, M.~Maciejewski, M.~Prioli,
  A.~Fernandez~Navarro, B.~Auchmann, and A.~Verweij,
  ``{Optimized field/circuit coupling for the
  simulation of quenches in superconducting magnets},'' 2017.

\bibitem{Panda_2015aa}
S.~P. Panda, K.~A. Salunkhe, and A.~M. Kulkarni, ``Experimental validation of
  waveform relaxation technique for power system controller testing,''
  \emph{Sadhana}, vol.~40, no.~1, pp. 89--106, 2015.

\bibitem{Lelarasmee_1982ab}
E.~Lelarasmee, A.~E. Ruehli, and A.~L. Sangiovanni-Vincentelli,
  ``{The waveform relaxation method for time-domain
  analysis of large scale integrated circuits},''
  \emph{{IEEE Trans. Comput. Aided. Des. Integrated
  Circ. Syst.}}, vol.~1, no.~3, pp. 131--145, 1982.

\bibitem{White_1985aa}
J.~K. White, F.~Odeh, A.~L. Sangiovanni-Vincentelli, and A.~E. Ruehli,
  ``{Waveform relaxation: Theory and practice},''
  \emph{{Transactions of the Society for Computer
  Simulation}}, vol.~2, no.~1, pp. 95--133, 1985.

\bibitem{Burrage_1995aa}
K.~Burrage, \emph{{Parallel and sequential methods
  for ordinary differential equations}}.\hskip 1em plus 0.5em minus 0.4em\relax
  Oxford: Oxford University Press, 1995.

\bibitem{Ho_1975aa}
C.-W. Ho, A.~E. Ruehli, and P.~A. Brennan, ``{The
  modified nodal approach to network analysis},''
  \emph{{IEEE Trans. Circ. Syst.}}, vol.~22, no.~6,
  pp. 504--509, Jun. 1975.

\bibitem{Estevez-Schwarz_2001aa}
D.~Estévez~Schwarz and C.~Tischendorf,
  ``{Mathematical problems in circuit
  simulation},'' \emph{{Math. Comput. Model. Dyn.
  Syst.}}, vol.~7, no.~2, pp. 215--223, 2001.

\bibitem{Gunther_2000aa}
M.~Günther and P.~Rentrop, ``{Numerical simulation
  of electrical circuits},'' \emph{{GAMM}}, vol.
  1-2, pp. 51--77, 2000.

\bibitem{Bartoszewski_2000aa}
Z.~Bartoszewski and M.~Kwapisz, ``{On error
  estimates for waveform relaxation methods for delay-differential
  equations},'' \emph{{SIAM J. Numer. Anal.}},
  vol.~38, no.~2, pp. 639--659, 2000.

\bibitem{Jackiewicz_1996aa}
Z.~Jackiewicz and M.~Kwapisz, ``{Convergence of
  waveform relaxation methods for differential-algebraic systems},''
  \emph{{SIAM J. Numer. Anal.}}, vol.~33, no.~6,
  pp. 2303--2317, Dec. 1996.

\bibitem{Bartel_2014ab}
A.~Bartel, M.~Brunk, and S.~Schöps, ``{On the
  convergence rate of dynamic iteration for coupled problems with multiple
  subsystems},'' \emph{{J. Comput. Appl. Math.}},
  vol. 262, pp. 14--24, May 2014.

\bibitem{Ravaioli_2012aa}
E.~Ravaioli, K.~Dahlerup-Petersen, F.~Formenti, J.~Steckert, H.~Thiesen, and
  A.~Verweij, ``{Modeling of the voltage waves in
  the {LHC} main dipole circuits},'' \emph{{IEEE
  Trans. Appl. Super.}}, vol.~22, no.~3, pp. 9\,002\,704--9\,002\,704, Jun.
  2012.

\bibitem{Ravaioli_2012ab}
E.~Ravaioli, K.~Dahlerup-Petersen, F.~Formenti, V.~Montabonnet, M.~Pojer,
  R.~Schmidt, A.~Siemko, M.~S. Camillocci, J.~Steckert, H.~Thiesen, and
  A.~Verweij, ``{Impact of the voltage transients
  after a fast power abort on the quench detection system in the {LHC} main
  dipole chain},'' \emph{{IEEE Trans. Appl.
  Super.}}, vol.~22, no.~3, pp. 9\,002\,504--9\,002\,504, Jun. 2012.

\end{thebibliography}
\end{document}